\begin{document}

\title{Approximating minimum-power edge-multicovers}

\author{Nachshon Cohen \and Zeev Nutov}
\institute{The Open University of Israel,\\
\email{nachshonc@gmail.com, nutov@openu.ac.il}}

\maketitle

\begin{abstract}
Given a graph with edge costs, the {\em power} of a node is the
maximum cost of an edge incident to it, and the power of a graph is
the sum of the powers of its nodes. Motivated by applications in
wireless networks, we consider the following fundamental problem 
in wireless network design. Given a graph $G=(V,E)$ 
with edge costs and degree bounds $\{r(v):v \in V\}$, 
the {\sf Minimum-Power Edge-Multi-Cover} ({\sf MPEMC}) problem  is to find a
minimum-power subgraph $J$ of $G$ such that the degree of every node $v$ in $J$ 
is at least $r(v)$. We give two approximation algorithms for {\sf MPEMC},
with ratios $O(\log k)$ and $k+1/2$, where $k=\max_{v \in V} r(v)$ is the maximum 
degree bound.
This improves the previous ratios $O(\log n)$ and $k+1$, 
and implies ratios 
$O(\log k)$ for the {\sf Minimum-Power $k$-Outconnected Subgraph} and 
$O\left(\log k \log \frac{n}{n-k} \right)$ for the {\sf Minimum-Power $k$-Connected Subgraph}
problems; the latter is the currently best known ratio for the min-cost version
of the problem. 
\end{abstract}

\section{Introduction} \label{s:intro}

\subsection{Motivation and problems considered}

Wireless networks are studied extensively due to their wide applications. 
The power consumption of a station determines its transmission range, and thus also 
the stations it can send messages to; the power typically increases at least
quadratically in the transmission range. Assigning power levels to
the stations (nodes) determines the resulting communication network.
Conversely, given a communication network, the power required at $v$
only depends on the farthest node reached directly by $v$.
This is in contrast with wired networks, in which every pair of stations 
that communicate directly incurs a cost. 
An important network property is fault-tolerance, which is often measured
by minimum degree or node-connectivity of the network.
Node-connectivity is much more central here than edge-connectivity,
as it models stations failures. Such power minimization problems were vastly studied;
see for example \cite{ACMP,HKMN,LYN,N-dir,zeev-p} and the references therein for 
a small sample of papers in this area.
The first problem we consider is finding a low power network with 
specified lower degree bounds.
The second problem is the {\sf Min-Power $k$-Connected Subgraph} problem. 
We give approximation algorithms for these problems,
improving the previously best known ratios.

\begin{definition}
Let $(V,J)$ be a graph with edge-costs $\{c(e):e \in J\}$.
For a node $v \in V$ let $\delta_J(v)$ denote the set of edges incident to $v$ in $J$.
The {\em power} $p_J(v)$ of $v$ is the maximum cost of an edge in $J$ incident to $v$, 
or $0$ if $v$ is an isolated node of $J$;
i.e., $p_J(v) = \max_{e \in \delta_J(v)} c(e)$ if $\delta_J(v) \neq \emptyset$,
and $p_J(v)=0$ otherwise. For $V' \subseteq V$ the power of $V'$ w.r.t. $J$ 
is the sum $p_J(V')=\sum_{v \in V'} p_J(v)$ of the powers of the nodes in $V'$.
\end{definition}

Unless stated otherwise, all graphs are assumed to be undirected and simple.
Let $n=|V|$. 
Given a graph $G=(V,E)$ with edge-costs $\{c(e):e \in E\}$, 
we seek to find a low power subgraph $(V,J)$ of $G$ that satisfies some prescribed property.
One of the most fundamental problems in Combinatorial Optimization is finding 
a minimum-cost subgraph that obeys specified degree constraints 
(sometimes called also ``matching problems'') c.f. \cite{Sch}.
Another fundamental property is fault-tolerance (connectivity).
In fact, these problems are related, and we use our algorithm 
for the former as a tool for approximating the latter.

\begin{definition}
Given degree bounds $r=\{r(v):v \in V\}$, we say that an edge-set $J$
on $V$ is an {\em $r$-edge cover} if $d_J(v) \geq r(v)$ for every $v \in V$,
where $d_J(v)=|\delta_J(v)|$ is the degree of $v$ in the graph $(V,J)$.
\end{definition}


\noindent
{\sf Minimum-Power Edge-Multi-Cover} ({\sf MPEMC}): \\
{\em Instance:} 
A graph $G=(V,E)$ with edge-costs $\{c(e): e \in E\}$, 
degree bounds \hphantom{\em Instance: } 
$r=\{r(v):v \in V\}$. \\
{\em Objective:} 
Find a minimum power $r$-edge cover $J \subseteq E$.

\vspace{0.2cm}

Given an instance of {\sf MPEMC}, let $k=\max\limits_{v \in V} r(v)$ denote the maximum requirement.

We now define our connectivity problems. 
A  graph is {\em $k$-outconnected from $s$} if it contains $k$
internally-disjoint $sv$-paths for all $v \in V \setminus \{s\}$.
A graph is {\em $k$-connected} if it is $k$-outconnected from every node, 
namely, if it contains $k$ internally-disjoint $uv$-paths for all $u,v \in V$.

\vspace{0.2cm}

\noindent
{\sf Minimum-Power $k$-Outonnected Subgraph} ({\sf MP$k$OS}): \\
{\em Instance}: \ 
A graph $G=(V,E)$ with edge-costs $\{c(e): e \in E\}$, a root $s \in V$, and 
\hphantom{\em Instance: } an integer $k$. \\
{\em Objective}: 
Find a minimum-power $k$-outconnected from $s$ spanning subgraph $J$ 
\hphantom{\em Objective:} of $G$.

\vspace{0.2cm}

\noindent
{\sf Minimum-Power $k$-Connected Subgraph} ({\sf MP$k$CS}): \\
{\em Instance}: \ 
A graph $G=(V,E)$ with edge-costs $\{c(e): e \in E\}$ and an integer $k$. \\
{\em Objective}: 
Find a minimum-power $k$-connected spanning subgraph $J$ of $G$.

\subsection{Our Results}
The previous best approximation ratio for {\sf MPEMC} was $O(\log n)$ \cite{KMNT}. 
Our main result improves this ratio to $O(\log k)$.

\begin{theorem} \label{t:mpec}
{\sf MPEMC} admits an $O(\log k)$-approximation algorithm.
\end{theorem}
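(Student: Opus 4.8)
The plan is to prove Theorem~\ref{t:mpec} by reducing the $O(\log k)$ bound to a sequence of simpler covering subproblems, exploiting the structure of the degree bounds. The key idea is that achieving each "unit" of coverage should cost at most a constant factor times the relevant portion of the optimum, and that by grouping the requirements cleverly we can pay for all $k$ units while the total cost telescopes to $O(\log k)$ rather than $O(\log n)$ or $O(k)$. Concretely, I would first solve the \emph{residual} covering problem: given a partial solution in which each node $v$ still needs $\rho(v)$ additional incident edges, find a low-power edge-set that reduces every residual requirement to some fraction. The natural tool here is a set-cover / submodular-cover style argument, where one repeatedly adds the most cost-effective batch of edges and charges against a decreasing potential. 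The crucial quantitative refinement over the previous $O(\log n)$ bound is to make the logarithmic factor depend on $k$, the maximum requirement, rather than on $n$.

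First I would reformulate {\sf MPEMC} as a covering problem in which the "elements" to be covered are the deficiency units $\sum_v r(v)$, and each candidate move (assigning a power level $p$ to a node, thereby activating all its incident edges of cost at most $p$) covers some of these units at a certain power price. The standard greedy analysis for submodular cover gives a ratio of $O(\log(\text{max coverage by a single move}))$; since a single node can cover at most $k$ deficiency units at once (its requirement is at most $k$, and each neighbor's requirement is at most $k$), the greedy bound naturally becomes $O(\log k)$ rather than $O(\log n)$. The main technical work is to show that the LP relaxation (or an equivalent fractional/density argument) admits a low-power fractional solution whose cost is within a constant factor of the optimum, so that the greedy step always finds a move whose density (power paid per unit covered) is competitive with $\mathrm{opt}/(\text{remaining units})$.

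The heart of the argument, and the step I expect to be the main obstacle, is establishing the \emph{density lemma}: at every stage of the greedy process there exists a node-power assignment covering the remaining deficiency at density at most $O(\mathrm{opt})$ divided by the number of remaining units. The difficulty is that power is a \emph{nonlinear}, per-node-maximum cost rather than an additive edge cost, so the usual averaging over edges of an optimal solution does not immediately yield a cheap dense batch. To handle this I would decompose the optimal power assignment into levels, and argue that within a geometric power band the optimal solution covers a constant fraction of the still-outstanding requirement, so that selecting the best band yields the needed density. Once this density lemma is in place, the telescoping sum $\sum_{i=1}^{k} \mathrm{opt}/i = O(\mathrm{opt}\log k)$ completes the analysis and yields the claimed $O(\log k)$ ratio. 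I would conclude by verifying that the greedy procedure runs in polynomial time, since there are only polynomially many candidate power levels (the distinct edge costs) per node.
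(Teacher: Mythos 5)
There is a genuine gap, and it sits exactly at the point where the ratio is supposed to drop from $O(\log n)$ to $O(\log k)$. Your quantitative claim that ``a single node can cover at most $k$ deficiency units at once'' is false: assigning power $p$ to a node $v$ activates one edge to each of up to $n-1$ neighbors, and each such edge removes one deficiency unit at that neighbor, so a single move can cover up to $r(v)+\deg(v)=\Theta(n)$ units. Consequently the standard greedy/submodular-cover bound $O(\log(\max\mbox{ coverage per move}))$ gives only $O(\log n)$, which is precisely the previously known ratio you are trying to beat. Likewise, the total number of deficiency units is $\sum_v r(v)$, which can be $\Theta(nk)$, so the telescoping sum you invoke runs over $\Theta(nk)$ terms, not $k$. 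Your proposal also defers the genuinely hard issue --- that an edge is only usable when \emph{both} endpoints carry enough power --- to a vague ``geometric power band'' decomposition, which is exactly the step that needs a real argument.

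The paper gets $O(\log k)$ by a different mechanism that you should note. It first reduces to a bipartite instance (factor $2$), and then tracks the \emph{weighted} potential $R_J=\sum_{b} w_b\, r_J(b)$, where $w_b$ is the cost of the $r(b)$-th cheapest edge at $b$. This potential satisfies $R_\emptyset \leq k\cdot{\sf opt}$, and --- the crucial point --- once $R_J \leq {\sf opt}$ one can \emph{stop} and finish with the trivial solution of the residual instance at additional power $O({\sf opt})$ (Lemma~\ref{l:2}); there is no need to drive the deficiency to zero. Each round reduces $R_J$ by a constant factor at power $O({\sf opt})$, by restricting to ``$\tau$-cheap'' edges (which bounds the power on the covered side) and running the $(1-1/e)$-approximation for the budgeted coverage problem {\sf BPBMEM} (which bounds the power on the other side); a binary search over the budget $\tau$ removes the need to know ${\sf opt}$. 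Since the potential only has to fall from $k\cdot{\sf opt}$ to ${\sf opt}$, $O(\log k)$ rounds suffice. Without an analogue of this ``weighted potential plus early termination'' idea, your outline does not deliver the dependence on $k$ rather than $n$.
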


For small values of $k$, the problem admits also the ratios $k+1$
for arbitrary $k$ \cite{HKMN}, while for $k=1$ the best known ratio is $k+1/2=3/2$ \cite{KN-cov}. 
Our second result extends the latter ratio to arbitrary $k$.

\begin{theorem} \label{t:mpec'}
{\sf MPEMC} admits a $(k+1/2)$-approximation algorithm.
\end{theorem}

For small values of $k$, say $k \leq 6$, the ratio $(k+1/2)$ is better than
$O(\log k)$ because of the constant hidden in the $O( \cdot)$ term.
And overall, our paper gives the currently best known ratios for all values $k \geq 2$.

\vspace*{0.1cm}

In \cite{LYN} it is proved that an $\alpha$-approximation for {\sf MPEMC} implies an 
$(\alpha+4)$-approximation for {\sf MP$k$OS}.
The previous best ratio for {\sf MP$k$OS}  was 
$O(\log n)+4=O(\log n)$ \cite{LYN} for large values of $k=\Omega(\log n)$, 
and $k+1$ for small values of $k$ \cite{zeev-p}.
From Theorem~\ref{t:mpec} we obtain the following.

\begin{theorem} \label{c:mpoc}
{\sf MP$k$OS} admits an $O(\log k)$-approximation algorithm.
\end{theorem}

In \cite{HKMN} it is proved that an $\alpha$-approximation for {\sf MPEMC}
and a $\beta$-approximation for {\sf Min-Cost $k$-Connected Subgraph} implies a 
$(\alpha+2\beta)$-approximation for {\sf MP$k$CS}.
Thus the previous best ratio for {\sf MP$k$CS} was
$2\beta+O(\log n)$ \cite{KMNT}, where $\beta$ is the best ratio for {\sf MC$k$CS}
(for small values of $k$ better ratios for {\sf MP$k$CS} are given in \cite{zeev-p}).
The currently best known value of $\beta$ is $O\left(\log k \log \frac{n}{n-k} \right)$ \cite{zeevnew},
which is $O(\log k)$, unless $k=n-o(n)$.
From Theorem~\ref{t:mpec} we obtain the following.

\begin{theorem} \label{c:mpkc}
{\sf MP$k$CS} admits an $O(\beta+\log k)$-approximation algorithm,
where $\beta$ is the best ratio for {\sf MC$k$CS}. 
In particular, {\sf MP$k$CS} admits an $O\left(\log k \log \frac{n}{n-k} \right)$-approximation algorithm.
\end{theorem}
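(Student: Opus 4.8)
The plan is to read Theorem~\ref{c:mpkc} off as a corollary of Theorem~\ref{t:mpec} via the reduction of \cite{HKMN} cited above, used as a black box: an $\alpha$-approximation for {\sf MPEMC} together with a $\beta$-approximation for {\sf MC$k$CS} produces an $(\alpha+2\beta)$-approximation for {\sf MP$k$CS}. Since all the genuine combinatorial work is already contained in Theorem~\ref{t:mpec}, the only task here is to supply the right value of $\alpha$ and then tidy up the resulting expression.

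First I would set $\alpha=O(\log k)$ using Theorem~\ref{t:mpec}, and let $\beta$ be the ratio of whichever {\sf MC$k$CS} algorithm is invoked as a subroutine. The reduction then delivers an approximation guarantee of $O(\log k)+2\beta=O(\beta+\log k)$ for {\sf MP$k$CS}, which is exactly the first assertion of the theorem and holds for any choice of the min-cost subroutine.

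For the ``in particular'' part I would instantiate $\beta$ with the currently best known bound $\beta=O\!\left(\log k\,\log\frac{n}{n-k}\right)$ of \cite{zeevnew}, obtaining the ratio $O\!\left(\log k\,\log\frac{n}{n-k}+\log k\right)=O\!\left(\log k\cdot\max\left\{1,\log\frac{n}{n-k}\right\}\right)$. The one step worth flagging is the simplification to the advertised closed form $O\!\left(\log k\,\log\frac{n}{n-k}\right)$: this is exact in the regime $k=\Omega(n)$, where $\log\frac{n}{n-k}=\Omega(1)$ makes {\sf MC$k$CS} the bottleneck and hence $\log k=O(\beta)$, and it holds in general under the usual reading of $\log\frac{n}{n-k}$ as $\max\{1,\log\frac{n}{n-k}\}$. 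I expect no real obstacle beyond this bookkeeping; all the content of the theorem is inherited from Theorem~\ref{t:mpec}, together with the point emphasized in the abstract that the power version thereby attains the best known ratio for the min-cost version.
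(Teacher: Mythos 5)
Your proof is correct and follows exactly the paper's own route: the paper also derives Theorem~\ref{c:mpkc} by plugging the $O(\log k)$ bound of Theorem~\ref{t:mpec} into the $(\alpha+2\beta)$ reduction of \cite{HKMN} and then instantiating $\beta=O\left(\log k \log \frac{n}{n-k}\right)$ from \cite{zeevnew}. Your remark about absorbing the additive $O(\log k)$ term into $O\left(\log k \log \frac{n}{n-k}\right)$ is a fair point of care that the paper handles only implicitly (noting that $\beta=O(\log k)$ unless $k=n-o(n)$), but it does not change the argument.
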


\subsection{Overview of the techniques}

Let the {\em trivial solution} for {\sf MPEMC} be obtained by picking for every 
node $v \in V$ the cheapest $r(v)$ edges incident to $v$.
It is known and easy to see that this produces an edge set of power at most 
$(k+1) \cdot {\sf opt}$, see \cite{HKMN}. 

Our $O(\log k)$-approximation algorithm uses the following idea.
Extending and generalizing an idea from \cite{KMNT},
we show how to find an edge set $I \subseteq E$ of power $O({\sf opt})$
such that for the residual instance, 
the trivial solution value is reduced by a constant fraction.
We repeatedly find and add such an edge set $I$ to the constructed solution, 
while updating the degree bounds accordingly to $r(v) \gets \max\{r(v)-d_I(v),0\}$.
After $O(\log k)$ steps, the trivial solution value is reduced to ${\sf opt}$, 
and the total power of the edges we picked is $O(\log k) \cdot {\sf opt}$.
At this point we add to the constructed solution the trivial solution of the residual problem,
which at this point has value ${\sf opt}$, obtaining an $O(\log k)$-approximate solution.

Our $(k+1/2)$-approximation algorithm uses a two-stage reduction.
The first reduction reduces {\sf MPEMC} to a constrained version of {\sf MPEMC} 
with $k=1$, where we also have lower bounds $\ell_v$ on the power 
of each node $v \in V$; these lower bounds are determined by the trivial solution
to the problem. We will show that a $\rho$-approximation 
algorithm to this constrained version implies a $(k-1+\rho)$-approximation
algorithm for {\sf MPEMC}. The second reduction reduces the constrained version 
to the {\sf Minimum-Cost Edge Cover} problem with a loss of $3/2$ in the 
approximation ratio. As {\sf Minimum-Cost Edge Cover} admits a polynomial time algorithm, 
we get a ratio $\rho=3/2$ for the constrained problem,
which in turn gives the ratio $k-1+\rho=k+1/2$ for {\sf MPEMC}.

\section{An $O(\log k)$-approximation (proof of Theorem~\ref{t:mpec})}

As in \cite{KMNT}, we reduce {\sf MPEMC} to {\sf Bipartite MPEMC},
where $G=(V,E)$ is a bipartite graph with sides $A,B$, and $r(a)=0$ for every $a \in A$ 
(so, only the nodes in $B$ may have positive degree bound).
This is done by taking two copies $A=\{a_v:v \in V\}$ and $B=\{b_v:v \in V\}$ of $V$,
for every edge $e=uv \in E$ adding the two edges $a_ub_v$ and $a_vb_u$ of cost $c(e)$ each,
and for every $v \in V$ setting $r(b_v)=r(v)$ and $r(a_v)=0$.
It is proved in \cite{KMNT} that this reduction invokes a factor of $2$ in the
approximation ratio, namely, that a $\rho$-approximation for bipartite {\sf MPEMC} 
implies a $2\rho$-approximation for general {\sf MPEMC}.

Let ${\sf opt}$ denote the optimal solution value of a problem instance at hand.
For $v \in V$, let $w_v$ be the cost of the $r(v)$-th least cost edge incident to $v$ 
in $E$ if $r(v) \geq 1$, and $w_v=0$ otherwise.
Given a partial solution $J$ to {\sf Bipartite MPEMC} let 
$r_J(v)=\max\{r(v)-d_J(v),0\}$ be the {\em residual bound} of $v$ w.r.t. $J$.
Let 
$$
R_J = \sum_{b \in B}{w_b r_J(b)} \ .
$$

The main step in our algorithm is given in the following lemma, 
which will be proved later.
 
\begin{lemma} \label{l:1}
There exists a polynomial time algorithm that given an edge set $J \subseteq E$, 
an integer $\tau$, and a parameter $\gamma>1$, either correctly establishes that 
$\tau < {\sf opt}$, or  
returns an edge set $I \subseteq E \setminus J$ such that 
$p_I(V) \leq (1+\gamma) \tau$ and $R_{J \cup I} \leq \theta R_J$, where
$\theta=1-\left(1-\frac{1}{\gamma}\right)\left(1-\frac{1}{e}\right)$. 
\end{lemma}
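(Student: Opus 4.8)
The plan is to treat the lemma as a \emph{budgeted covering} task and to solve that task by LP rounding, so let me first record what holds whenever $\tau \ge {\sf opt}$. Fix an optimal $r$-edge cover $F^*$, so $p_{F^*}(V)={\sf opt}\le\tau$. Since $d_{F^*}(b)\ge r(b)$ and $w_b$ is the cost of the $r(b)$-th cheapest edge at $b$, some edge at $b$ has cost $\ge w_b$, hence $p_{F^*}(b)\ge w_b$ and $\sum_{b\in B}w_b\le p_{F^*}(B)\le\tau$ (so the $B$-side is cheap in aggregate). More importantly, $F^*\setminus J$ is disjoint from $J$, has power $p_{F^*\setminus J}(V)\le p_{F^*}(V)\le\tau$ (deleting edges never raises a power), and satisfies $d_{F^*\setminus J}(b)\ge r(b)-d_J(b)=r_J(b)$ for every $b$. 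Thus, when $\tau\ge{\sf opt}$ there exists an edge set disjoint from $J$, of power at most $\tau$, covering all residual demands. Since $R_{J\cup I}$ is exactly the weight $\sum_{b}w_b\bigl(r_J(b)-\min\{d_I(b),r_J(b)\}\bigr)$ of the residual demand left uncovered by $I$, it suffices to give a polynomial algorithm that either certifies that no disjoint cover of power $\le\tau$ exists (which forces $\tau<{\sf opt}$), or returns $I\subseteq E\setminus J$ with $p_I(V)\le(1+\gamma)\tau$ and uncovered weighted demand at most $\theta R_J$.

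Next I would write a level-cut LP for minimum-power covering of the residual demands. For each node $v$ and each edge-cost $t$ occurring at $v$ introduce $f_{v,t}\in[0,1]$, non-increasing in $t$, read as ``$v$ has power $\ge t$'', with node power $\sum_t(t-t^-)f_{v,t}$ where $t^-$ is the preceding cost; for each $e=ab\in E\setminus J$ introduce $x_e\in[0,1]$ with $x_{ab}\le f_{a,c(ab)}$, $x_{ab}\le f_{b,c(ab)}$, together with the covering constraints $\sum_{ab\in E\setminus J}x_{ab}\ge r_J(b)$. Setting $f,x$ to the indicators of $F^*\setminus J$ shows the LP optimum is at most $\tau$ when $\tau\ge{\sf opt}$, so if the LP value exceeds $\tau$ we output ``$\tau<{\sf opt}$''. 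Otherwise I would round the LP to an $I\subseteq E\setminus J$ and aim for a rounding with the two properties (a) $\mathrm{E}[p_I(V)]\le\tau$ and (b) $\mathrm{E}[U]\le R_J/e$, where $U$ denotes the uncovered weighted residual demand. Property (a) is easy: sample for each node $v$ independently a power threshold with $\Pr[p(v)\ge t]=f_{v,t}$ (valid since $f_{v,\cdot}$ is monotone) and take all available edges, so that $\mathrm{E}[p_I(v)]\le\sum_t(t-t^-)f_{v,t}$.

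Granting (a) and (b), the lemma follows by a clean averaging step that produces exactly $\theta$. Let $A$ be the event $\{p_I(V)\le(1+\gamma)\tau\}$; by Markov's inequality $\Pr[\overline A]\le\frac1{1+\gamma}$, so $\Pr[A]\ge\frac{\gamma}{1+\gamma}$. Then $\mathrm{E}[U\mid A]\le \mathrm{E}[U]/\Pr[A]\le (R_J/e)\cdot\frac{1+\gamma}{\gamma}=\tfrac{1+\gamma}{e\gamma}R_J$, and $\tfrac{1+\gamma}{e\gamma}=\tfrac1e+\tfrac1{e\gamma}\le\tfrac1e+\tfrac1\gamma-\tfrac1{e\gamma}=\theta$ because $2/e\le 1$. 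Hence some outcome in $A$ has both $p_I(V)\le(1+\gamma)\tau$ and $U\le\theta R_J$, i.e.\ $R_{J\cup I}\le\theta R_J$; such an outcome is found deterministically by the method of conditional expectations.

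The crux — and the step I expect to be the real obstacle — is property (b): showing that rounding loses only a $1/e$ factor of the weighted residual demand \emph{under a rounding that also satisfies} (a). Two points need care. First, the multicover shortfall: for a single $b$ with demand $d=r_J(b)$ and independent edge-availabilities of total expected value $\ge d$, one needs $\mathrm{E}[(d-d_I(b))^+]\le d/e$, which is tight at $d=1$ (this is the origin of the $1-1/e$) and which I would prove by reducing to the extremal Poisson case. Second, and more seriously, the independent \emph{per-node} power sampling that yields (a) correlates the two endpoints of each edge, so $\Pr[ab\text{ available}]\approx f_{a,c}f_{b,c}$ can fall well below $x_{ab}$, destroying (b). Overcoming this requires a rounding that is truthful edge-by-edge, making each $ab$ available with probability $\ge x_{ab}$ while keeping $\mathrm{E}[p_I(V)]\le\tau$ and enough independence among the edges at each $b$ to apply the shortfall bound; the bipartite structure (demand only on $B$, together with $\sum_b w_b\le\tau$) is exactly what should make such a dependent rounding possible, and designing it is the heart of the proof.
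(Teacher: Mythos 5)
Your proposal has a genuine gap, and you name it yourself: property (b) --- a rounding that simultaneously respects the power budget and loses only a $1-1/e$ fraction of the weighted residual demand --- is never constructed. You correctly observe that the natural per-node threshold sampling that gives (a) makes an edge $ab$ available only when \emph{both} endpoints draw high thresholds, so its availability probability can be near $f_{a,c}f_{b,c}\ll x_{ab}$, and you then state that designing a rounding which avoids this ``is the heart of the proof.'' That is precisely the part that cannot be waved at: the whole content of Lemma~\ref{l:1} is getting the coverage guarantee while paying only $O(\tau)$ power, and an LP-rounding route for min-power objectives has no standard off-the-shelf fix for this endpoint correlation. The surrounding scaffolding (the observation that $F^*\setminus J$ is a residual cover of power $\le\tau$, the identity for $R_{J\cup I}$, and the Markov-plus-conditioning step, whose arithmetic $\frac{1+\gamma}{e\gamma}\le\frac1e+\frac1\gamma-\frac1{e\gamma}=\theta$ does check out since $2/e\le 1$) is fine but carries none of the difficulty; the unproven shortfall bound $\mathrm{E}[(d-d_I(b))^+]\le d/e$ is a second, smaller loose end.

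The paper takes a different and more elementary route that sidesteps the correlation issue entirely. It works with the bipartite reduction (demands only on $B$) and controls the two sides of each edge by two separate, deterministic mechanisms: the $B$-side power is bounded by restricting attention to \emph{$\tau$-cheap} edges, i.e.\ edges $e$ at $b$ with $c(e)\le\frac{\tau\gamma}{R}w_br(b)$, so that \emph{any} subset of cheap edges automatically has $p_I(B)\le\gamma\tau$; the $A$-side power is a hard budget $p_I(A)\le\tau$ in a budgeted maximum-coverage problem ({\sf BPBMEM}), for which a $(1-\frac1e)$-approximation is already known from \cite{KMNT}. The factor $(1-\frac1\gamma)$ then comes not from Markov's inequality but from a clean combinatorial averaging argument (Lemma~\ref{l:cheap}): the cheap edges of an optimal cover already satisfy all but a $1/\gamma$ fraction of the weighted demand $R$, because any node retaining an expensive optimal edge contributes power at least $\frac{\tau\gamma}{R}w_br(b)$, and these contributions cannot exceed $\tau$ in total. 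Composing the two losses gives exactly $\theta=1-(1-\frac1\gamma)(1-\frac1e)$, with the ``declare $\tau<{\sf opt}$'' branch triggered simply when the approximate coverage falls short. If you want to rescue your plan, the lesson is that you should decouple the two endpoints of each edge rather than try to round a joint LP.
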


\begin{lemma} \label{l:2}
Let $J \subseteq E$ and let $F \subseteq E \setminus J$ be an edge set obtained by picking 
$r_J(b)$ least cost edges in $\delta_{E \setminus J}(b)$ for every $b \in B$. 
Then $J \cup F$ is an $r$-edge-cover and: $p_F(B) \leq {\sf opt}$,
$p_F(A) \leq R_J \leq k \cdot {\sf opt}$. 
\end{lemma}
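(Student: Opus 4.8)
The plan is to verify the three assertions separately. The feasibility claim that $J\cup F$ is an $r$-edge-cover is immediate: for $a\in A$ we have $r(a)=0$, and for $b\in B$ the set $F$ adds exactly $r_J(b)=\max\{r(b)-d_J(b),0\}$ new edges at $b$, so $d_{J\cup F}(b)=d_J(b)+r_J(b)\ge r(b)$. (Here I use that a feasible instance has $d_E(b)\ge r(b)$, so $\delta_{E\setminus J}(b)$ indeed contains at least $r_J(b)$ edges to pick.) The real content is the two power bounds, and both reduce to the same counting argument comparing the edges that $F$ selects at a node to a suitable reference family of cheap edges.

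For $p_F(B)\le{\sf opt}$, I would fix an optimal solution $J^*$ with $p_{J^*}(V)={\sf opt}$ and prove the node-wise inequality $p_F(b)\le p_{J^*}(b)$ for every $b\in B$; summing over $B$ and discarding the nonnegative $A$-side power of $J^*$ then yields $p_F(B)\le\sum_{b}p_{J^*}(b)\le{\sf opt}$. The key step is to exhibit $r_J(b)$ edges of $\delta_{E\setminus J}(b)$ each of cost at most $p_{J^*}(b)$. Since $d_{J^*}(b)\ge r(b)$ while at most $d_J(b)$ of the edges of $J^*$ at $b$ can lie in $J$, at least $r(b)-d_J(b)=r_J(b)$ edges of $J^*$ at $b$ avoid $J$; all of them cost at most $p_{J^*}(b)$ by definition of power. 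Hence the $r_J(b)$-th cheapest edge of $\delta_{E\setminus J}(b)$, whose cost is exactly $p_F(b)$, costs at most $p_{J^*}(b)$. The case $r_J(b)=0$ is trivial since then $p_F(b)=0$.

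For $p_F(A)\le R_J$, I would first bound $p_F(A)$ by the total cost of $F$. Because $G$ is bipartite, every edge of $F$ has exactly one endpoint in $A$, so
$$
p_F(A)=\sum_{a\in A}p_F(a)\le\sum_{a\in A}\sum_{e\in\delta_F(a)}c(e)=\sum_{e\in F}c(e)=\sum_{b\in B}\sum_{e\in\delta_F(b)}c(e),
$$
using power $\le$ sum at each $A$-node. For fixed $b$, the $r_J(b)$ edges of $\delta_F(b)$ each cost at most $p_F(b)$, so the inner sum is at most $r_J(b)\,p_F(b)$. It remains to show $p_F(b)\le w_b$: among the $r(b)$ globally cheapest edges of $\delta_E(b)$ (whose maximum cost is $w_b$), at most $d_J(b)$ lie in $J$, leaving at least $r_J(b)$ of cost $\le w_b$ in $E\setminus J$; thus the $r_J(b)$-th cheapest edge of $\delta_{E\setminus J}(b)$ costs at most $w_b$. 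Combining, $p_F(A)\le\sum_b r_J(b)\,w_b=R_J$.

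Finally, $R_J\le k\cdot{\sf opt}$ follows from $r_J(b)\le r(b)\le k$ together with $w_b\le p_{J^*}(b)$, the latter being exactly the reference-family bound already used (the $r(b)$-th cheapest edge at $b$ costs at most $p_{J^*}(b)$ since $J^*$ covers $b$). Hence $R_J\le k\sum_b w_b\le k\sum_b p_{J^*}(b)\le k\cdot{\sf opt}$. The main obstacle throughout is the two counting steps showing that enough cheap edges survive in $E\setminus J$; once the observation ``at most $d_J(b)$ edges of any prescribed family at $b$ can lie in $J$'' is isolated, both power estimates and the $w_b$-comparison follow by the same argument, with only the degenerate case $r_J(b)=0$ needing a separate trivial mention.
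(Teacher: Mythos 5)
Your proof is correct and follows essentially the same route as the paper: bound $p_F(b)$ by $w_b$ (and by the power of an optimal cover at $b$), sum over $B$ for the first bound, and bound $p_F(A)$ by the total cost of $F$ regrouped over $B$-endpoints to get $R_J$. The only difference is that you spell out the counting argument (at most $d_J(b)$ of any reference family at $b$ can lie in $J$) that the paper leaves implicit in the phrase ``by the definition of $F$''.
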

\begin{proof}
Since $F$ is an $r_J$-edge-cover, $J \cup F$ is an $r$-edge-cover.
By the definition of $F$, for any $r$-edge-cover $I$,
$p_F(b) \leq w_b \leq p_I(b)$ for all $b \in B$.
In particular, if $I$ is an optimal $r$-edge-cover, then 
$$p_F(B) \leq \sum_{b \in B} w_b \leq \sum_{b \in B} p_I(b) = p_I(B) \leq {\sf opt} \ .$$
Also, 
$$R_J = \sum_{b \in B}{w_b r_J(b)} \leq k \cdot \sum_{b \in B} w_b \leq k \cdot {\sf opt} \ .$$
Finally, $p_F(A) \leq R_J$ since
$$p_F(A) = \sum_{a \in A} p_F(a) \leq \sum_{a \in A} \sum_{e \in \delta_F(a)} c(e)
=\sum_{e \in F} c(e) \leq  \sum_{b \in B} w_b r_J(b) =R_J \ . $$
This concludes the proof of the lemma.
\qed
\end{proof}

Theorem \ref{t:mpec} is deduced from Lemmas \ref{l:1} and \ref{l:2} as follows.
We set $\gamma$ to be constant strictly greater than $1$, say $\gamma=2$.
Then $\theta=1-\frac{1}{2}\left(1-\frac{1}{e}\right)$.
Using binary search, we find the least integer $\tau$ such that the following procedure 
computes an edge set $J$ satisfying $R_J \leq \tau$.  

\vspace{0.2cm}

\noindent
{\em Initialization:} $J \gets \emptyset$. \\
{\em Loop:} \ Repeat $\lceil \log_{1/\theta} k \rceil$ times: \\
\hphantom{Loop:} \ \ \ \ Apply the algorithm from Lemma~\ref{l:2}: \\ 
\hphantom{Loop:} \ \ \ \ - If it establishes that $\tau < {\sf opt}$ then return ``ERROR'' and STOP. \\
\hphantom{Loop:} \ \ \ \ - Else do $J \gets J \cup I$. 

\vspace{0.2cm}

After computing $J$ as above, we compute an edge set $F \subseteq E \setminus J$ as in Lemma~\ref{l:2}.
The edge-set $J \cup F$ is a feasible solution, by Lemma~\ref{l:2}.
We claim that for any $\tau \geq {\sf opt}$ the above procedure returns an edge 
set $J$ satisfying $R_J \leq \tau$; thus binary search indeed applies.
To see this, note that $R_\emptyset \leq k \cdot {\sf opt}$ and thus
$$R_J \leq R_\emptyset \cdot \theta^{\lceil \log_{1/\theta} k \rceil} \leq 
k \cdot {\sf opt} \cdot 1/k={\sf opt} \leq \tau \ .$$
Consequently, the least integer $\tau$ for which the above procedure does not return ``ERROR''
satisfies $\tau \leq {\sf opt}$. Thus
$p_J(V) \leq \lceil \log_{1/\theta} k \rceil \cdot (1+\gamma) \cdot \tau = O(\log k) \cdot {\sf opt}$.
Also, by Lemma~\ref{l:2}, $p_F(V) \leq {\sf opt}+R_J \leq 2{\sf opt}$. 
Consequently, 
$$p_{J \cup F}(V) \leq p_J(V)+p_F(V) =O(\log k) \cdot {\sf opt}+2{\sf opt} =O(\log k) \cdot {\sf opt} \ .$$

In the rest of this section we prove Lemma~\ref{l:1}. 
It is sufficient to prove the statement in the lemma for the residual 
instance $((V,E \setminus J),r_J)$ with edge-costs restricted to $E \setminus J$; 
namely, we may assume that $J=\emptyset$.
Let $R=R_\emptyset=\sum_{b \in B} w_b r(b)$.

\begin{definition}
An edge $e \in E$ incident to a node $b \in B$ is {\em $\tau$-cheap} if 
$c(e) \leq \frac{\tau\gamma}{R} \cdot w_b r(b)$.
\end{definition}

\begin{lemma} \label{l:cheap}
Let $F$ be an $r$-edge-cover, let $\tau \geq p_F(B)$, and let 
$$I=\bigcup\limits_{b \in B} \{e \in \delta_E(b):c(e) \leq \frac{\tau\gamma}{R} \cdot w_b r(b)\}$$ 
be the set of $\tau$-cheap edges in $E$.
Then $R_{I \cap F} \leq R/\gamma$ and $p_I(B) \leq \gamma \tau$.
\end{lemma}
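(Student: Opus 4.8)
The plan is to prove the two inequalities separately, both by reducing everything to the single budget constraint $\tau \ge p_F(B) = \sum_{b\in B} p_F(b)$ together with the defining threshold of $\tau$-cheapness.

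For the bound $p_I(B)\le\gamma\tau$ the argument is immediate. Fix $b\in B$. Every edge of $I$ incident to $b$ is $\tau$-cheap, hence has cost at most $\frac{\tau\gamma}{R}\,w_b r(b)$; therefore $p_I(b)\le \frac{\tau\gamma}{R}\,w_b r(b)$ (and $p_I(b)=0$ if $b$ is isolated in $I$). Summing over $b\in B$ and recalling $R=\sum_{b\in B} w_b r(b)$ gives
$$p_I(B)\le \frac{\tau\gamma}{R}\sum_{b\in B} w_b r(b)=\gamma\tau .$$

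The bound $R_{I\cap F}\le R/\gamma$ is the heart of the lemma, and I would control the residual bound $r_{I\cap F}(b)=\max\{r(b)-d_{I\cap F}(b),0\}$ node by node. Since $I$ contains \emph{all} $\tau$-cheap edges, the edges of $I\cap F$ incident to $b$ are exactly the $\tau$-cheap edges of $F$ at $b$; writing $n_b$ for the number of remaining (expensive) edges of $F$ at $b$ — those with $c(e)>\frac{\tau\gamma}{R}w_b r(b)$ — we have $d_{I\cap F}(b)=d_F(b)-n_b$. I would then split into two cases. If $n_b=0$, then $d_{I\cap F}(b)=d_F(b)\ge r(b)$ because $F$ is an $r$-edge-cover, so $r_{I\cap F}(b)=0$ and $b$ contributes nothing. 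If $n_b\ge1$, I use only the trivial estimate $r_{I\cap F}(b)\le r(b)$, so such a $b$ contributes at most $w_b r(b)$ to $R_{I\cap F}$.

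It remains to bound $\sum_{b:\,n_b\ge1} w_b r(b)$, and this is exactly where the power budget enters. Each $b$ with $n_b\ge1$ carries an expensive $F$-edge $e$ with $c(e)>\frac{\tau\gamma}{R}w_b r(b)$, whence $p_F(b)\ge c(e)>\frac{\tau\gamma}{R}w_b r(b)$, i.e.\ $w_b r(b)<\frac{R}{\tau\gamma}\,p_F(b)$. Summing over these nodes and using $\sum_{b\in B}p_F(b)=p_F(B)\le\tau$ yields
$$R_{I\cap F}\le\sum_{b:\,n_b\ge1} w_b r(b)<\frac{R}{\tau\gamma}\sum_{b:\,n_b\ge1} p_F(b)\le\frac{R}{\tau\gamma}\,\tau=\frac{R}{\gamma},$$
as required. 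The step I expect to be the main obstacle is the case analysis on $n_b$: one must notice that a node is charged its full potential $w_b r(b)$ precisely when it still retains an expensive edge, and that keeping even one such edge already forces $p_F(b)$ above the cheapness threshold, so the global budget $\tau\ge p_F(B)$ absorbs all these charges. Trying to bound $n_b$ itself would be hopeless, since the degree of $b$ in $F$ may be arbitrarily large; the trick is to replace $n_b$ by the indicator of $n_b\ge1$ through the estimate $r_{I\cap F}(b)\le r(b)$.
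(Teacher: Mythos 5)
Your proposal is correct and follows essentially the same route as the paper: your set $\{b : n_b \ge 1\}$ is exactly the paper's set $D=\{b: \delta_{F\setminus I}(b)\neq\emptyset\}$, the charge $w_br(b) < \frac{R}{\tau\gamma}p_F(b)$ absorbed by the budget $\tau \ge p_F(B)$ is the paper's chain $\tau \ge p_{F\setminus I}(B) \ge \frac{\tau\gamma}{R}\sum_{b\in D}w_br(b)$, and the bound $p_I(B)\le\gamma\tau$ is proved identically. No gaps.
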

\begin{proof}
Let $D=\{b \in B: \delta_{F \setminus I}(b) \neq \emptyset\}$. 
Since for every $b \in D$ there is an edge $e \in F \setminus I$ incident to
$b$ with $c(e) > \frac{\tau\gamma}{R} \cdot w_b r(b)$,
we have $p_{F \setminus I}(b) \geq \frac{\tau \gamma}{R} \cdot w_b r(b)$ for every $b \in D$.
Thus
$$\tau \geq p_F(B) \geq p_{F \setminus I}(B) = \sum_{b \in D} p_{F \setminus I}(b) \geq 
\tau \cdot \frac{\gamma}{R} \sum_{b \in D} w_b r(b)  \ .$$
This implies $\sum_{b \in D} w_b r(b) \leq R/\gamma$. 
Note that for every $b \in B\setminus D$,
$\delta_F(b) \subseteq \delta_I(b)$ and hence $r_{I\cap F}(b)=r_F(b)=0$.
Thus we obtain:
$$R_{I\cap F} = \sum_{b \in B}w_b r_{I\cap F}(b) = \sum_{b \in D} w_b r_{I \cap F}(b) \leq 
\sum_{b \in D} w_b r(b) \leq R/\gamma \ .$$ 

To see that $p_I(B) \leq \gamma \tau$ note that
$$p_I(B)=\sum_{b \in B}p_I(b) \leq \frac{\tau \gamma}{R} \sum_{b \in B} w_b r(b)=
\frac{\tau \gamma}{R} \cdot R = \tau \gamma \ .$$
This concludes the proof of the lemma.
\qed
\end{proof}

In \cite{KMNT} it is proved that the following problem,
which is a particular case of submodular function  minimization subject to 
matroid and knapsack constraint (see \cite{LNMN}) admits a 
$\left(1-\frac{1}{e}\right)$-approximation algorithm.

\vspace{0.2cm}

\noindent
{\sf Bipartite Power-Budgeted Maximum Edge-Multi-Coverage} ({\sf BPBMEM}): \\
{\em Instance:} \ A bipartite graph $G=(A \cup B,E)$ with edge-costs $\{c(e):e \in E\}$~and
\hphantom{\em Instance: } node-weights $\{w_v:v \in B\}$,  
degree bounds $\{r(v):v \in B\}$, and a \hphantom{\em Instance: } budget $\tau$.\\
{\em Objective:} Find $I \subseteq E$ with $p_I(A) \leq \tau$ that maxmizes
$${\sf val}(I)=\sum_{v \in B} w_v \cdot \min\{d_I(v),r(v)\} \ .$$

\vspace{0.2cm}

The following algorithm computes an edge set as in Lemma~\ref{l:1}. 
\begin{enumerate}
\item
Among the $\tau$-cheap edges, compute a $\left(1-\frac{1}{e}\right)$-approximate
solution $I$ to {\sf BPBMEM}.
\item
If $R_I \leq \theta R$ then return $I$, where 
$\theta =1-\left(1-\frac{1}{\gamma}\right)\left(1-\frac{1}{e}\right)$; \\
Else declare ``$\tau<{\sf opt}$''.
\end{enumerate}

\vspace{0.2cm}

Clearly, $p_I(A) \leq \tau$. By Lemma~\ref{l:cheap}, $p_I(B) \leq \gamma \tau$.
Thus $p_I(V) \leq p_I(A)+p_I(B) \leq (1+\gamma) \tau$.

Now we show that if $\tau \geq {\sf opt}$ then $R_I \leq \theta R$.
Let $F$ be the set of cheap edges in some optimal solution. 
Then $p_F(A) \leq {\sf opt} \leq \tau$. 
By Lemma~\ref{l:cheap} $R_F \leq R/\gamma$, namely, 
$F$ reduces $R$ by at least $R\left(1-\frac{1}{\gamma}\right)$.
Hence our $\left(1-\frac{1}{e} \right)$-approximate solution $I$ to {\sf BPBMEM}
reduces $R$ by at least $R\left(1-\frac{1}{e}\right)\left(1-\frac{1}{\gamma}\right)$.
Consequently, we have 
$R_I \leq R-R\left(1-\frac{1}{e}\right)\left(1-\frac{1}{\gamma}\right)=\theta R$, 
as claimed.

The proof of Theorem~\ref{t:mpec} is complete.

\section{A $\left(k+\frac{1}{2}\right)$-approximation (proof of Theorem~\ref{t:mpec'})}

We say that an edge set $F \subseteq E$ {\em covers} a node set $U \subseteq V$, 
or that $F$ is a {\em $U$-cover}, if $\delta_F(v)\neq \emptyset$ for every $v \in U$. 
Consider the following auxiliary problem:

\vspace{0.2cm}

\noindent
{\sf Restricted Minimum-Power Edge-Cover}  \\
{\em Instance:} \ 
A graph $G=(V,E)$ with edge-costs $\{c(e): e \in E\}$, $U \subseteq V$, and~degree 
\hphantom{\em Instance: } 
bounds $\{\ell_v:v \in U\}$. \\
{\em Objective:} 
Find a power assignment $\{\pi(v):v \in V\}$ that minimizes $\sum_{v \in V} \pi(v)$,
\hphantom{\em Objective:}
such that $\pi(v) \geq \ell_v$ for all $v \in U$, and such that the edge set 
\hphantom{\em Objective:}~$F=\{e=uv\in E: \pi(u),\pi(v)\geq c(e)\}$ covers $U$.

\vspace{0.2cm}

Later, we will prove the following lemma.

\begin{lemma} \label{l:restricted}
{\sf Restricted Minimum-Power Edge-Cover} admits a $3/2$-approximation algorithm.
\end{lemma}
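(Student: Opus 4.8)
The plan is to reduce \textsf{Restricted Minimum-Power Edge-Cover} to the polynomially solvable \textsf{Minimum-Cost Edge-Cover} problem, incurring a factor $3/2$. The first step isolates the lower bounds $\{\ell_v\}$. Since every feasible assignment must pay $\pi(v)\ge\ell_v$ on $U$, I would pre-pay the constant $\sum_{v\in U}\ell_v$ and work with the \emph{excess} power $\pi'(v)=\pi(v)-\ell_v\ge 0$ (taking $\ell_v=0$ for $v\notin U$). An edge $e=uv$ incident to $v\in U$ with $c(e)\le\ell_v$ then forces no excess at $v$, so the residual problem is an ordinary power-edge-cover of $U$ in which the power charged to a node $v$ for reaching cost $x$ is $g_v(x)=\max\{0,x-\ell_v\}$. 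Because $\sum_{v\in U}\ell_v$ is an additive constant present in both the optimum and our solution, a $3/2$-approximation for the excess problem yields a $3/2$-approximation overall: if our excess power is at most $\frac32$ times the optimal excess, adding the common term $\sum_{v\in U}\ell_v$ preserves the ratio.

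Next I would exploit that the edge set $F$ of a minimal solution (the edges actually realized, covering $U$) is a disjoint union of stars. For a star with edge costs $c_1\ge c_2\ge\cdots\ge c_t$, the center pays for $c_1$ while each leaf pays for its own edge, so the star's contribution is governed by its maximum edge (paid both at the center and at its heaviest leaf) plus the remaining leaf edges (paid once). In the pure cost case this yields the clean identity $p_F(V)=c(F)+c(M)$, where $M$, the set of maximum-cost star edges, is a matching; with the discounts $g_v$ the decomposition is analogous but node-dependent. I would encode the resulting minimization as a \textsf{Minimum-Cost Edge-Cover} instance on the same vertex set, with edge costs reflecting the excess power forced by each edge and with the doubling of the maximum star edge modelled through the matching structure, so that a minimum-cost cover of $U$ in the auxiliary instance reads back as a feasible power assignment.

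The ratio $3/2$ then comes from a charging argument against two lower bounds on the optimal excess power. First, a minimal optimal solution $F^\ast$ is an edge cover of $U$, so its cost is at least the minimum cost of such a cover; this bounds the single-charged part of our solution by $\textsf{opt}$. Second, the doubly-charged part corresponds to the heavy-edge matching, and in the optimum this matching is itself paid twice, so it contributes at most half of $\textsf{opt}$. Designing the auxiliary costs so that our solution's doubly-charged part is likewise bounded by $\frac12\textsf{opt}$ gives total excess power at most $\textsf{opt}+\frac12\textsf{opt}=\frac32\textsf{opt}$. Solving the auxiliary \textsf{Minimum-Cost Edge-Cover} optimally (in polynomial time) and translating the cover back into a power assignment therefore gives a $3/2$-approximation for the excess problem, which by the reduction of the first paragraph proves the lemma.

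The hard part will be the core reduction of the second paragraph: designing the auxiliary edge costs so that the two directions are simultaneously tight to within $3/2$. Specifically, (a) the minimum-cost cover of the auxiliary instance must be realizable as a feasible assignment respecting every $\ell_v$, and (b) the optimal power solution must map to an auxiliary cover of no larger cost, so that solving min-cost cover optimally does not overshoot. The crux is accounting for the maximum star edge, charging it at rate $3/2$ rather than the naive $2$, while carrying the node-dependent discounts $g_v$ through the star decomposition: the per-endpoint excesses make an edge's effective ``cost'' asymmetric and break the clean symmetric identity $p_F(V)=c(F)+c(M)$. Ensuring that the doubly-charged edges still form a genuine matching that can be charged at the $3/2$ rate, despite this asymmetry, is where the main technical work lies.
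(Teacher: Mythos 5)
Your high-level strategy is the same as the paper's: reduce to \textsf{Minimum-Cost Edge-Cover}, separate out the additive term $\sum_{v\in U}\ell_v$ and track only the excess $\max\{p_I(v)-\ell_v,0\}$ (the paper's quantity $D(I)$), decompose a minimal cover into stars, and charge the ``doubly-paid'' heavy star edges at rate $3/2$. But the proposal stops exactly where the proof begins. The entire content of the lemma is the construction of the auxiliary instance and the verification that both directions of the reduction hold, and you explicitly defer both (``designing the auxiliary edge costs \dots is where the main technical work lies''). Concretely, two ideas are missing. First, the auxiliary graph cannot live on the edge set of $G$: to cover the many leaves of a star whose center may lie outside $U$ (and whose power is paid only once), the paper introduces, for every length-two path $u$--$x$--$v$ in $G$, an auxiliary edge $uv$ of cost $\ell_u+\ell_v+D(\{ux,xv\})$, together with loop edges $vv$ of cost $\ell_v+D(\{vu\})$ for covering isolated requirements; your phrase ``a \textsf{Minimum-Cost Edge-Cover} instance on the same vertex set'' with costs ``reflecting the excess power forced by each edge'' does not produce these gadgets, and without them a star's leaves cannot be covered at anything close to the star's power. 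Second, the rate $3/2$ is not obtained by the global argument you sketch (``the heavy-edge matching is paid twice in the optimum, hence contributes at most half of opt''); as you yourself note, that identity breaks under the node-dependent offsets $\ell_v$. The paper instead orders the leaves $v_1,\dots,v_d$ of each star by descending edge cost, covers consecutive pairs $v_{2i-1},v_{2i}$ by gadget edges, and bounds the overpayment $q_i=\max\{c(v_0v_{2i-1})-\ell_{v_0},0\}$ of the $i$-th pair by $\frac12(\pi(v_{2i-3})+\pi(v_{2i-2}))$, i.e., charges it against the \emph{previous, more expensive} pair; a four-case analysis on the parity of $d$ and on whether $v_0\in U$ then closes the bound. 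This descending-order telescoping charge is the mechanism that turns the naive factor $2$ into $3/2$, and it is absent from your argument.

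Because the proposal correctly identifies the target reduction and the source of the constant but supplies neither the gadget construction nor the charging scheme that makes it work, it is a plan for the paper's proof rather than a proof.
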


Theorem~\ref{t:mpec'} is deduced from Lemma~\ref{l:restricted} and the following statement. 

\begin{lemma} \label{l:alpha}
If {\sf Restricted Minimum-Power Edge-Cover} admits a $\rho$-ap\-pro\-ximation algorithm,
then {\sf Minimum-Power Edge-Multi-Cover} admits a $(k-1+\rho)$-approxima\-tion algorithm.
\end{lemma}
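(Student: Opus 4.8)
The plan is to reduce {\sf MPEMC} to {\sf Restricted Minimum-Power Edge-Cover}. For a node $v$ with $r(v)\ge 1$ let $w_v$ be the cost of the $r(v)$-th least costly edge incident to $v$ (and $w_v=0$ otherwise); this is exactly the quantity read off the trivial solution. Two elementary facts will drive everything: in \emph{every} feasible $r$-edge-cover $J$ we have $p_J(v)\ge w_v$, since $v$ has at least $r(v)$ incident edges in $J$ and the most expensive of them costs at least $w_v$; consequently $\sum_v w_v \le {\sf opt}$. First I would form the restricted instance on the same graph with $U=\{v:r(v)\ge 1\}$ and lower bounds $\ell_v=w_v$, and show that its optimum is at most ${\sf opt}$: taking an optimal $r$-edge-cover $J^*$ and setting $\pi^*(v)=p_{J^*}(v)$ gives a power assignment meeting every bound $\ell_v=w_v$ whose induced edge set contains $J^*$ and hence covers $U$, with total power exactly ${\sf opt}$. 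Running the assumed $\rho$-approximation then yields a power assignment $\pi$ and its induced cover $F$ with $\sum_v\pi(v)\le\rho\cdot{\sf opt}$ and $\pi(v)\ge w_v$ for all $v\in U$.

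Next I would boost the degree-$1$ cover $F$ into a full $r$-edge-cover. For each $v$ with $d_F(v)<r(v)$, add the $r(v)-d_F(v)$ cheapest edges of $\delta_E(v)\setminus\delta_F(v)$; call this set $E'_v$ and put $E'=\bigcup_v E'_v$. Since at least $r(v)-d_F(v)$ of the $r(v)$ globally cheapest edges at $v$ survive the deletion of $\delta_F(v)$, each edge that $v$ adds for itself costs at most $w_v$, and $v$ adds at most $r(v)-1\le k-1$ edges (as $d_F(v)\ge 1$ for $v\in U$). Because $E'_v$ is disjoint from $\delta_F(v)$, we get $d_{F\cup E'}(v)\ge d_F(v)+(r(v)-d_F(v))=r(v)$, so $J=F\cup E'$ is feasible; this choice of boosting edges disjoint from $F$ is what secures feasibility despite $F$ being only a degree-$1$ cover.

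The heart of the argument, and the step I expect to be the main obstacle, is the power accounting that produces $k-1+\rho$ rather than $k+\rho$. For a node $v$, its most expensive incident edge in $J$ is either an edge of $F$ (cost $\le\pi(v)$), one of $v$'s own added edges (cost $\le w_v\le\pi(v)$), or a cross edge $vu$ added by a neighbour $u$ (cost $\le w_u$). Hence $p_J(v)\le\pi(v)+\max\{c(vu):vu\in E'_u\}$; the crucial point is that the power $\pi(v)\ge w_v$ already paid in the restricted cover absorbs both the $F$-contribution and $v$'s own cheap edges, leaving only the neighbours' cross edges to be charged separately. Summing, the leading terms give $\sum_v\pi(v)\le\rho\cdot{\sf opt}$. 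For the cross terms I would charge each contribution $c(vu)$ to the neighbour $u$ that added $vu$: since $u$ adds at most $k-1$ edges each of cost at most $w_u$, node $u$ receives total charge at most $(k-1)w_u$, so the cross terms sum to at most $(k-1)\sum_u w_u\le(k-1){\sf opt}$. Adding the two bounds yields $p_J(V)\le(k-1+\rho){\sf opt}$, completing the reduction.
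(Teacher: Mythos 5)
Your proof is correct and follows essentially the same route as the paper: run the $\rho$-approximation for the restricted problem with $\ell_v=w_v$, augment each deficient node with its $r_F(v)$ cheapest remaining incident edges (each of cost at most $w_v$), and charge the cross-edge contributions back to the node that added them, giving $\pi(V)+(k-1)\sum_v w_v\le(\rho+k-1)\,{\sf opt}$. The only difference is that you spell out explicitly why the restricted optimum is at most ${\sf opt}$ and why the surviving cheap edges still cost at most $w_v$, which the paper leaves implicit.
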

\begin{proof}
Consider the following algorithm. 
\begin{enumerate}
\item 
Let $\pi(v)$ be the power assignment computed by the 
$\rho$-approximation algorithm for {\sf Restricted Minimum-Power Edge-Cover}  
with $U=\{v \in V:r(v) \geq 1\}$ and bounds $\ell_v = w_v$ for all $v \in U$. 
Let $F=\{e=uv\in E: \pi(u),\pi(v)\geq c(e)\}$.
\item 
For every $v \in V$ let $I_v$ be the edge-set obtained by picking 
the least cost $r_F(v)$ edges in $\delta_{E \setminus F}(v)$ and let 
$I=\cup_{v \in V} I_v$. 
\end{enumerate}
Clearly, $F \cup I$ is a feasible solution to {\sf Minimum-Power Edge-Multi-Cover}.
Let ${\sf opt}$ denote the optimal solution value for {\sf Minimum-Power Edge-Multi-Cover}.
In what follows note that $\pi(V) \leq \rho \cdot {\sf opt}$ and that 
$\sum_{v \in V} w_v \leq {\sf opt}$.

We claim that 
$$p_{I \cup F}(V) \leq \pi(V)+(k-1)\cdot {\sf opt} \ .$$ 
As $\pi(V) \leq \rho \cdot {\sf opt}$, this implies  
$p_{I \cup F}(V) \leq (\rho+k-1) \cdot {\sf opt}$.

For $v \in V$ let $\Gamma_v$ be the set of neighbors of $v$ in the graph $(V,I_v)$. 
The contribution of each edge set $I_v$ to the total power is at most 
$p_{I_v}(\Gamma_v)+p_{I_v}(v)$.
Note that $\pi(v) \geq p_{I_v}(v)$ and $\pi(v) \geq p_F(v)$ for every $v \in V$,
hence $p_{F \cup I_v}(v) \leq \pi(v)$.
This implies 
$$
p_{F \cup I}(V) \leq \sum_{v \in V} (\pi(v) + p_{I_v}(\Gamma_v)) = \pi(V)+\sum_{v \in V} p_{I_v}(\Gamma_v) \ .
$$
Now observe that 
$|\Gamma_v|=|I_v| = r_F(v) \leq k-1$ and that $p_{I_v}(u) \leq w_v$
for every $u \in \Gamma_v$.
Thus 
$$p_{I_v}(\Gamma_v) \leq (k-1) \cdot w_v \ \ \ \ \forall v \in V \ .$$
Finally, using the fact that $\sum_{v \in V} w_v \leq {\sf opt}$, we obtain
$$
p_{F \cup I}(V) \leq \pi(V) + \sum_{v \in V} p_{I_v}(\Gamma_v) \leq 
\pi(V)+(k-1)\sum_{v \in V} w_v \leq \pi(V)+(k-1)\cdot {\sf opt} \ .
$$
This finishes the proof of the lemma. \qed
\end{proof}

In the rest of this section we prove Lemma~\ref{l:restricted}.

We reduce {\sf Restricted Minimum-Power Edge-Cover} to the following problem that admits an exact polynomial time algorithm, c.f. \cite{Sch}.

\vspace{0.2cm}

\noindent
{\sf Minimum-Cost Edge-Cover}: \\
{\em Instance:} \ A multi-graph (possibly with loops) $G=(U,E)$ with edge-costs 
\hphantom{\em Instance: }~$\{c(e):e \in E\}$. \\
{\em Objective:} Find a minimum cost edge-set $F \subseteq E$ that covers $U$. 

\vspace{0.2cm}

Our reduction is not approximation ratio preserving, but incurs a loss of $3/2$ in the appro\-xi\-mation ratio.
That is, given an instance $(G,c,U,\ell)$ of {\sf Restricted Minimum-Power Edge-Cover}, 
we construct in polynomial time an instance $(G',c')$ of {\sf Minimum-Cost Edge-Cover} such that:
\begin{itemize}
\item[(i)] 
For any $U$-cover $I'$ in $G'$ corresponds a feasible 
solution $\pi$ to $(G,c,U,\ell)$ with $\pi(V) \leq c'(I')$.
\item[(ii)] 
${\sf opt}' \leq  3{\sf opt}/2$, where ${\sf opt}$ is an optimal solution value 
to {\sf Restricted Minimum-Power Edge-Cover} and
${\sf opt}'$ is the minimum cost of a $U$-cover in $G'$.
\end{itemize}
Hence if $I'$ is an optimal (min-cost) solution to $(G',c')$, then 
$\pi(V) \leq c'(I') \leq 3{\sf opt}/2$.

Clearly, we may 
set $\ell_v=0$ for all $v \in V \setminus U$.
For $I \subseteq E$ let 
$$D(I)=\sum_{v \in V} \max\{p_I(v)-\ell_v,0\} \ .$$
Here is the construction of the instance $(G',c')$, where $G'=(U,E')$ and $E'$ 
consists of the following three types of edges, where for every edge $e' \in E'$
corresponds a set $I(e') \subseteq E$ of one edge or of two edges.  

\begin{enumerate}
\item 
For every $v \in U$, $E'$ has a loop-edge $e'=vv$ with 
$c'(vv)=\ell_v+D(\{vu\})$ where $vu$ is is an arbitrary chosen minimum cost edge 
in $\delta_E(v)$. \\
Here $I(e')=\{vu\}$.
\item 
For every $uv \in E$ such that $u,v \in U$, $E'$ has an edge $e'=uv$ with 
$c'(uv)=\ell_u + \ell_v +D((\{uv\})$. \\
Here $I(e')=\{uv\}$.
\item 
For every pair of edges $ux,xv \in E$ such that $c(ux) \geq c(xv)$, $E'$ has an edge $e'=uv$ 
with $c'(uv)=\ell_v+\ell_u+D(\{ux,xv\})$. \\
Here $I(e')=\{ux,xv\}$.
\end{enumerate}

\begin{lemma} \label{l:D}
Let $I' \subseteq E'$ be a $U$-cover in $G'$, 
let $I=\cup_{e \in I'} I(e)$, and let $\pi$ 
be a power assignment defined on $V$ by $\pi(v)=\max\{p_I(v),\ell_v\}$.
Then $\pi(V) \leq c'(I')$, $I$ is a $U$-cover in $G$,
and $\pi$ is a feasible solution to $(G,c,U,\ell)$.
\end{lemma}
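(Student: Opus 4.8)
Looking at Lemma~\ref{l:D}, I need to verify three claims about the power assignment $\pi(v)=\max\{p_I(v),\ell_v\}$ derived from a $U$-cover $I'$ in $G'$.

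My plan is to address the three assertions in order of increasing difficulty. The claim that $I$ is a $U$-cover in $G$ is the easiest: since $I'$ covers $U$, every $v\in U$ has some incident edge $e'\in I'$, and by the construction of the three edge types in $G'$, the associated set $I(e')\subseteq E$ always contains an edge of $E$ touching $v$ (a loop $vv$ gives the real edge $vu\in\delta_E(v)$; a type-2 or type-3 edge incident to $v$ in $G'$ yields a real edge incident to $v$). Hence $\delta_I(v)\neq\emptyset$ for all $v\in U$, so $I$ covers $U$.

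\emph{Feasibility of $\pi$.} I must check that $\pi(v)\geq\ell_v$ for all $v\in U$ and that the induced edge set $F=\{e=uv\in E:\pi(u),\pi(v)\geq c(e)\}$ covers $U$. The degree-bound inequality $\pi(v)\ge\ell_v$ is immediate from the definition $\pi(v)=\max\{p_I(v),\ell_v\}$. For the covering condition, I would argue that $I\subseteq F$: for any real edge $e=uv\in I$, both endpoints satisfy $p_I(u)\ge c(e)$ and $p_I(v)\ge c(e)$ by the definition of power, hence $\pi(u),\pi(v)\ge c(e)$, so $e\in F$. Since $I$ already covers $U$, so does $F\supseteq I$.

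\emph{The cost bound $\pi(V)\le c'(I')$.} This is the main obstacle, because it requires relating the summed power $\pi(V)=\sum_v\max\{p_I(v),\ell_v\}$ to the edge-by-edge costs $c'(e')$, which were designed precisely via the $D(\cdot)$ correction term to make this work. The key identity I would establish is $\pi(V)\le\sum_{v\in V}\ell_v+D(I)$, which follows from $\max\{p_I(v),\ell_v\}\le\ell_v+\max\{p_I(v)-\ell_v,0\}$ summed over $v$ (using $\ell_v=0$ for $v\notin U$). The plan is then to show the right-hand side is bounded by $\sum_{e'\in I'}c'(e')=c'(I')$. Here I would use that $I=\bigcup_{e'\in I'}I(e')$ and that the assignment $e'\mapsto I(e')$ lets me charge the cost of each $v$ to the edges covering it. The subtle point — and where the $D(\cdot)$ bookkeeping matters — is that $D$ is subadditive over the decomposition: since $p_I(v)=\max_{e'}\,p_{I(e')}(v)$ is dominated by the sum over contributing $e'$, one gets $D(I)\le\sum_{e'\in I'}D(I(e'))$, and combining this with the fact that each $c'(e')$ already contains the $\ell$-terms of its endpoints plus $D(I(e'))$ yields $\sum_v\ell_v+D(I)\le\sum_{e'\in I'}c'(e')$. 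The care needed is to ensure each $\ell_v$ is counted at least once on the right (guaranteed since $I'$ covers $U$) and that the $D$-terms telescope correctly across the loop, single-edge, and two-edge cases, which is exactly why the three edge types were constructed as they were.
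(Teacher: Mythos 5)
Your proposal is correct and follows essentially the same route as the paper: $I(e')$ covers the endnodes of $e'$ hence $I$ covers $U$, $I$ is contained in the induced edge set $F$ so $\pi$ is feasible, and the cost bound comes from writing $\pi(V)\le\sum_{v\in U}\ell_v+D(I)$, using subadditivity $D(\cup_{e'}I(e'))\le\sum_{e'}D(I(e'))$, and the per-edge identity $c'(e')=\ell(e')+D(I(e'))$. The only detail worth making explicit is the type-1 (loop) case, where $c'(vv)$ charges only $\ell_v$ and not $\ell_u$ for the auxiliary endpoint $u$ of the chosen minimum-cost edge $vu$; this is exactly where the paper invokes the assumption $\ell_v\ge\min\{c(e):e\in\delta_E(v)\}$.
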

\begin{proof}
We have that $I$ is a $U$-cover in $G$, by the definition of $I$
and since $I(e')$ covers both endnodes of every $e' \in E'$.  
By the definition of $\pi$, we have that $I \subseteq \{e=uv\in E: \pi(u),\pi(v)\geq c(e)\}$. 
Hence $\pi$ is a feasible solution to $(G,c,U,\ell)$, as claimed.

We prove that $\pi(V) \leq c'(I')$.
For $e'=uv \in E'$ let $\ell(e')=\ell_v$ if $e'$ is a type~1 edge,
and $\ell(e')=\ell_u+\ell_v$ otherwise.
Note that 
$\pi(v)=\max\{p_I(v),\ell(v)\}=\ell_v+\max\{p_I(v)-\ell(v),0\}$, hence
$$\pi(V) = \sum_{v \in U} \ell_v + \sum_{v \in V}\max\{p_I(v)-\ell(v),0\}=
\sum_{v \in U} \ell_v + D(I) \ .$$ 
By the definition of $\ell(e')$ and since $I'$ is a $U$-cover
$\sum_{v \in U} \ell_v \leq \sum_{e' \in I'}\ell(e')$.
Also, $D(I) = D\left(\cup_{e' \in I'}I(e')\right)$, by the definition of $I$.
Thus we have
$$\sum_{v \in U} \ell_v +D(I) \leq 
\sum_{e' \in I'}\ell(e') + D\left(\cup_{e' \in I'}I(e')\right) \ .$$
It is easy to see that 
$$D\left(\cup_{e' \in I'}I(e')\right) \leq \sum_{e' \in I'}D(I(e')) \ .$$
Finally, note that $\ell(e')+D(I(e'))=c'(e')$ for every $e' \in I'$ 
(if $e'$ is a type~1 edge, this follows from our assumption that 
$\ell_v \geq \min\{c(e):e \in \delta_E(v)\}$).
Combining we get
\begin{eqnarray*} 
\pi(V) & =    & \sum_{v \in U} \ell_v +D(I) \leq \\ 
       & \leq & \sum_{e' \in I'}\ell(e') + D\left(\cup_{e' \in I'}I(e')\right) \leq \\
       & \leq & \sum_{e' \in I'}\ell(e')+\sum_{e' \in I'}D(I(e'))= \\
       & =    & \sum_{e' \in I'} \left(\ell(e')+D(I(e'))\right) = \\
       & =    & \sum_{e' \in I'} c'(e') = c'(I') \ .
\end{eqnarray*}
\qed
\end{proof}

\begin{lemma} \label{l:pi}
Let $\{\pi(v):v\in V\}$ be a feasible solution to an instance $(G,c,U,\ell)$
of {\sf Restricted Minimum-Power Edge-Cover}.
Then there exists a $U$-cover $I'$ in $G'$ such that 
$c'(I') \leq 3 \pi(V)/2$.
\end{lemma}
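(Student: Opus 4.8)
The plan is to take a feasible power assignment $\pi$ for the instance $(G,c,U,\ell)$ and build from it a $U$-cover $I'$ in $G'$ of cost at most $3\pi(V)/2$. The natural object to work with is the edge set $F=\{e=uv\in E:\pi(u),\pi(v)\ge c(e)\}$ that $\pi$ induces; by feasibility $F$ covers $U$. So the real task is to select, from the edges of $F$, a sub-cover of $U$ that can be cheaply realized by type~1, type~2 and type~3 edges of $G'$, and to charge its $G'$-cost against $\pi(V)$.

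**First I would** reduce $F$ to a minimal $U$-cover: discard edges of $F$ until every remaining edge has at least one endpoint in $U$ that it alone covers (a private endpoint). A standard structural fact about minimal edge covers is that each connected component of the chosen edge set is a \emph{star} (possibly a single edge), since a minimal cover contains no path on three edges and no cycle. This star structure is exactly what the three edge-types of $G'$ are designed to encode: a single edge $uv$ of the cover with $u,v\in U$ maps to a type~2 edge; a two-edge path $ux,xv$ maps to a type~3 edge; and a single covered node needing only a cheapest incident edge maps to a type~1 loop. I would partition each star into such pieces and let $I'$ be the corresponding set of $G'$-edges, so that $I=\cup_{e'\in I'}I(e')$ covers $U$.

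**The main obstacle** is the cost accounting, and this is where the factor $3/2$ must appear. Recall $c'(e')=\ell(e')+D(I(e'))$, where $D(I)=\sum_{v}\max\{p_I(v)-\ell_v,0\}$ measures the power \emph{in excess} of the lower bounds $\ell$. The $\sum_{v\in U}\ell_v$ part of the cost is paid for directly, since $\pi(v)\ge\ell_v$ gives $\sum_{v\in U}\ell_v\le\pi(V)$; the delicate part is bounding $\sum_{e'\in I'}D(I(e'))$, the excess power, against the excess part of $\pi(V)$. The difficulty is that in a star one internal node $x$ may carry power used by several cover edges, so a naive charge could double-count. The $3/2$ arises precisely here: in the worst case a type~3 edge realizing a two-edge path $ux,xv$ with $c(ux)\ge c(xv)$ pays for the full power at the center $x$ plus at the leaves, and amortizing this excess against the $\pi$-power actually present at $x$, $u$, $v$ — where $\pi$ need only reach the larger of the two edge costs at $x$ — loses at most a factor $3/2$ relative to the contribution $\pi$ makes to $D$.

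**So the key steps, in order, are:** (1) form $F$ from $\pi$ and extract a minimal $U$-cover, invoking its star decomposition; (2) translate each star-piece into a type~1, type~2, or type~3 edge of $G'$, obtaining a $U$-cover $I'$; (3) write $c'(I')=\sum_{e'\in I'}\ell(e')+\sum_{e'\in I'}D(I(e'))$, bound the first sum by $\sum_{v\in U}\ell_v\le\pi(V)$, and bound the second sum by $\tfrac32$ times the excess $\sum_{v}\max\{\pi(v)-\ell_v,0\}$ via a node-by-node amortized charging that exploits $c(ux)\ge c(xv)$ and $\pi(x)\ge c(ux)$ on each type~3 piece. Summing the two bounds yields $c'(I')\le\pi(V)+\tfrac32\bigl(\pi(V)-\sum_{v\in U}\ell_v\bigr)$, and a careful combination of the $\ell$-term (which appears with a saving) gives the clean bound $c'(I')\le 3\pi(V)/2$. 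The heart of the argument, and the only nonroutine part, is the $3/2$ charging lemma for the type~3 pieces.
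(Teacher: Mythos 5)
Your overall strategy matches the paper's: pass to an inclusion-minimal $U$-cover, use the fact that it decomposes into node-disjoint stars, pair leaves into type~3 edges (plus one type~1 or type~2 edge for a leftover center or leaf), and charge the $G'$-cost against $\pi$. However, the cost accounting in your step (3) has a genuine gap, in two respects. First, the intermediate bound you propose --- $\sum_{e'\in I'}D(I(e'))\le\frac{3}{2}\sum_v\max\{\pi(v)-\ell_v,0\}$ --- is false. Take a star with center $v_0\notin U$ and $d=4$ leaves in $U$, with $\ell_{v_j}=c(v_0v_j)=\pi(v_j)=1$ for every leaf and $\pi(v_0)=1$. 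The total excess is $1$ (only $v_0$ contributes), but each of the two type~3 edges has $D(I(e'))=1$, because the center's induced power $c(v_0v_{2i-1})=1$ exceeds $\ell_{v_0}=0$; so the left side is $2>3/2$. The lemma still holds in this example, but only because the slack hides inside the $\ell$-terms of the leaves: the correct charging, as in the paper, bounds $c'(I')$ against the full values $\pi(v_j)$ of the leaves rather than splitting into an $\ell$-part and an excess part.

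Second, and more fundamentally, the device that makes the amortization work is missing from your sketch: one must order the leaves by \emph{descending} cost, $c(v_0v_1)\ge c(v_0v_2)\ge\cdots\ge c(v_0v_d)$, and pair consecutive leaves $v_{2i-1},v_{2i}$. Then the center's surcharge in the $i$-th type~3 edge, $q_i=\max\{c(v_0v_{2i-1})-\ell_{v_0},0\}$, satisfies $q_i\le c(v_0v_{2i-1})\le\frac{1}{2}\left(c(v_0v_{2i-3})+c(v_0v_{2i-2})\right)\le\frac{1}{2}\left(\pi(v_{2i-3})+\pi(v_{2i-2})\right)$ for $i\ge 2$; that is, it is charged at half-rate to the two leaves of the \emph{previous} pair, while $q_1\le\pi(v_0)$ absorbs the center's own power. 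Your local observation that ``$c(ux)\ge c(xv)$ and $\pi(x)\ge c(ux)$'' controls only a single pair; without the global descending order and this shift-by-one charging, a ``node-by-node amortized charging'' does not produce the $3/2$, since charging every $q_i$ directly to the center costs $\lfloor d/2\rfloor\cdot\pi(v_0)$. The four-case analysis for the leftover edge (depending on the parity of $d$ and on whether $v_0\in U$) also needs to be carried out, but that part is routine once the pairing and the $q_i$ bound are in place.
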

\begin{proof}
Let $I\subseteq \{e=uv\in E: \pi(u),\pi(v)\geq c(e)\}$ be an inclusion minimal $U$-cover.
We may assume that $\pi(v)=\max\{p_I(v),\ell_v\}$ for every $v \in V$.
Since any inclusion minimal $U$-cover is a collection of node disjoint stars,
it is sufficient to prove the statement for the case when $I$ is a star.
Then $I$ has at most one node not in $U$, 
and if there is such a node, then it is the center of the star, if $|I| \geq 2$;
in the case $I$ consists of a single edge $e$, then we define the center of $I$ 
to be the endnode of $e$ in $V \setminus U$ if such exists, 
or an arbitrary endnode of $e$ otherwise.

We define a $U$-cover $I'$ in $G'$, and show that 
\begin{equation} \label{e:pi}
c'(I') \leq \frac{3}{2} \sum_{v \in V}\max\{p_I(v),\ell_v\}=\frac{3}{2}\pi(V) \ .
\end{equation}
Let $v_0$ be the center of $I$ and let $\{v_i:1\leq i\leq d\}$ be the leaves of $I$ 
ordered by descending order of costs $c(v_0v_i) \geq c(v_0v_{i+1})$. 
The $U$-cover $I' \subseteq E'$ is defined as follows.
We cover each pair $v_{2i-1},v_{2i}$, $i=1, \ldots,\lfloor d/2 \rfloor$, by a type~3 edge.
This covers all the nodes except $v_0$, and maybe $v_d$ if $d$ is odd.
We add an additional edge $f$ of type 1 or 2, if there are nodes in $U$
($v_0$ and/or $v_d$) that remain uncovered by the picked type~3 edges. 
Formally, we have the following 4 cases, see Figure~1.
\begin{figure} \label{f:cases}
\centering
\epsfbox{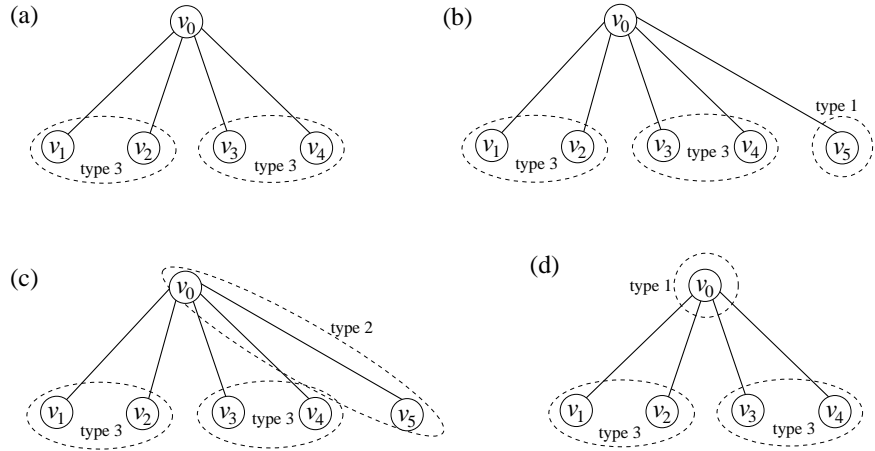}
\caption{Illustration to the definition of the $U$-cover $I'$.}
\end{figure}

\begin{enumerate}
\item
$d$ is even and $v_0 \notin U$, see Figure~1(a). 
Then $U$ is covered by type~3 edges. 
\item
$d$ is odd, and $v_0 \notin U$, see Figure~1(b).  
Then we add a type~1 edge $f$ to cover $v_d$. 
\item
$d$ is odd and $v_0 \in U$, see Figure~1(c). 
Then we add a type~2 edge $f$ to cover $v_0,v_d$.
\item
$d$ is even and $v_0 \in U$, see Figure~1(d). 
Then we add a type~1 edge $f$ to cover $v_0$.
\end{enumerate}

Consider a type~3 edge $v_{2i-1}v_{2i} \in I'$. 
Let $q_i =\max\{c(v_{2i-1}v_0)-\ell_{v_0},0\}$.
Note that $c'(v_{2i-1} v_{2i})\leq \pi(v_{2i-1})+\pi(v_{2i})+q_i$. The key point is that 
$$q_i \leq \frac{1}{2} ( \pi(v_{2i-3})+\pi(v_{2i-2}) ) \ \ \  i=2, \ldots, \lfloor d/2 \rfloor \ .$$
This is since $q_i\leq c(v_0 v_{2i-1})\leq \frac{1}{2}\left(c(v_0 v_{2i-3})+c(v_0 v_{2i-2})\right)$ while $c(v_0 v_j) \leq \pi(v_j) $.
Therefore, 
$$\sum_{i=1}^{d/2} c'(v_{2i-1} v_{2i}) \leq \sum_{i=1}^{d/2}[\pi(v_{2i-1})+\pi(v_{2i})+q_i] \leq \sum_{i=1}^{2 \left\lfloor d/2 \right\rfloor}\pi(v_i)+q_1+\frac{1}{2}\sum_{i=1}^{d-2}\pi(v_i)$$

Now, we prove that (\ref{e:pi}) hold in each one of our four cases.

\begin{enumerate}
\item 
$v_0\notin U$ and $d$ is even. Note that $q_1 \leq c(v_0 v_1) \leq \pi(v_0)$.
Then: $$c'(I')=\sum_{i=1}^{d/2}c'(e_i) \leq \frac{3}{2}\sum_{i=1}^d \pi(v_i) + q_1 \leq \frac{3}{2}\sum_{i=1}^d \pi(v_i) + \pi(v_0) \leq \frac{3}{2}\sum_{i=0}^d \pi(v_i)$$
\item 
$v_0\notin U$ and $d$ is odd. In this case $f=v_d v_d$ is a loop type~1 edge, so 
$c'(f)\leq \pi(v_d)+ \max(c(v_0 v_d)-\ell_{v_0},0)$. 
This implies 
\begin{eqnarray*}
q_1+c'(f) & \leq & c(v_0 v_1) + c(v_0 v_d) + \pi(v_d) \leq \pi(v_0) + \frac{1}{2}[\pi(v_0)+\pi(v_d)]+\pi(v_d) \\
& = & \frac{3}{2}\left(\pi(v_0)+\pi(v_d)\right) \ .
\end{eqnarray*}
Thus
$$c'(I')=\sum_{i=1}^{d/2}c'(e_i) + c'(f) \leq \frac{3}{2}\sum_{i=1}^{d-1} \pi(v_i) + c'(f)+q_1 
\leq \frac{3}{2}\sum_{i=0}^d \pi(v_i)$$
\item 
$v_0\in U$ and $d$ is odd. In this case $f=v_0 v_d$, so 
$c'(f) \leq \max(\ell_{v_0},c(v_0 v_d))+\pi(v_d)$. 
This implies 
$q_1+c'(f)\leq c(v_0 v_1) + c(v_0 v_d) + \pi(v_d) \leq  \frac{3}{2}\left(\pi(v_0)+\pi(v_d)\right)$.
Thus
$$c'(I')=\sum_{i=1}^{d/2}c'(e_i) + c'(f) \leq \frac{3}{2}\sum_{i=1}^{d-1} \pi(v_i) + c'(f)+q_1 
\leq \frac{3}{2}\sum_{i=0}^d \pi(v_i) \ .$$
\item 
$v_0\in U$ and $d$ is even. In this case $f= v_0 v_0$ is a loop type~1 edge, 
so $c'(f)\leq \ell_{v_0}+c(v_0 v_d)\leq \ell_{v_0}+\frac{1}{2}\left(\pi(v_{d-1})+\pi(v_d)\right)$. This implies $q_1+c'(f)\leq c(v_0 v_1)+\frac{1}{2}\left(\pi(v_{d-1})+\pi(v_d)\right)$.
Thus
\begin{eqnarray*}
c'(I') & = & \sum_{i=1}^{d/2}c'(e_i) +c'(f) \leq \sum_{i=1}^d \pi(v_i) +  \frac{1}{2}\sum_{i=1}^{d-2}\pi(v_i) +q_1+c'(f) \\
& \leq & \frac{3}{2}\sum_{i=1}^{d}\pi(v_i) + \pi(v_0) \leq \sum_{i=0}^d \pi(v_i) \ .
\end{eqnarray*}
\end{enumerate}
This concludes the proof of the lemma.
\qed
\end{proof}

As was mentioned, Lemmas \ref{l:D} and \ref{l:pi} imply Lemma~\ref{l:restricted}.
Lemmas \ref{l:restricted} and \ref{l:alpha} imply Theorem~\ref{t:mpec'},
hence the proof of Theorem~\ref{t:mpec'} is now complete.

\section{Conclusions and open problems}

The main result of this paper is a new approximation algorithm for {\sf MPEMC}
with ratio $O(\log k)$. This improves the ratio $O(\log(nk))=O(\log n)$ of \cite{KMNT}.
We also gave a $(k+1/2)$-approximation algorithm, which is better than our 
$O(\log k)$-approximation algorithm for small values of $k$ (roughly $k \leq 6$).

The main open problem is whether the ratio $O(\log k)$ shown in this paper
is tight, or the problem admits a constant ratio approximation algorithm. 

\bibliographystyle{abbrv}
\bibliography{L}

\end{document}